\documentclass[11pt]{article}

\usepackage{verbatim}
\usepackage{amsmath}
\usepackage{amssymb}
\usepackage{amsthm}
\usepackage{yhmath}
\usepackage{graphicx}
\usepackage{subfigure}
\usepackage[usenames,dvipsnames]{color}
\usepackage[margin=1in]{geometry}

    \newtheorem{theorem}{Theorem}[section]
    \newtheorem{lemma}[theorem]{Lemma}

    \newtheorem{corollary}[theorem]{Corollary}

\usepackage{geometry}                
\geometry{letterpaper}                   
\usepackage{graphicx}
\usepackage{amssymb}
\usepackage{epstopdf}
\DeclareGraphicsRule{.tif}{png}{.png}{`convert #1 `dirname #1`/`basename #1 .tif`.png}

\usepackage{amsfonts}
\usepackage{amssymb}
\usepackage{amsthm}
\usepackage{graphicx}
\usepackage{subfigure}

\newcommand{\e}{\varepsilon}
\newcommand{\s}{\sigma}
\renewcommand{\l}{\lambda}
\renewcommand{\O}{\Omega}
\newcommand{\C}{\mathcal{C}}
\newcommand{\T}{\mathcal{T}}

\title{Practical Conditions for Well-behaved-ness of Anisotropic Voronoi Diagrams}

\author{Guillermo D. Canas\\
CBCL-IIT, McGovern Institute.
Massachusetts Institute of Technology\\
}
\date{}                                           

\begin{document}
\maketitle

\begin{abstract}
Recently, simple conditions for well-behaved-ness of anisotropic Voronoi diagrams have been proposed. 
While these conditions ensure well-behaved-ness of two types of practical anisotropic Voronoi diagrams, as well as the geodesic-distance one, 
in any dimension, 
they are both prohibitively expensive to evaluate, and not well-suited for typical problems in approximation or optimization. 
We propose simple conditions that can be efficiently evaluated, and are better suited to practical problems of approximation and optimization. 
The practical utility of this analysis is enhanced by the fact that orphan-free anisotropic Voronoi diagrams have embedded triangulations as duals. 
\end{abstract}


\section{Introduction and Previous Work}


The Voronoi diagram, or Dirichlet tesselation, is a fundamental mathematical construct, with extensive practical application in diverse fields~\cite{Aurenhammer91}. 
Given a discrete set of sites, its definition requires a choice of distance function, which determines how the input domain is broken up into tiles of points closest to a unique site. 
The distance typically used is the natural distance in a Euclidean space, but generalizations based on other choices of distance can be defined~\cite{Power,DW,Power2,LS,LL2000}. 
The use of a particular distance can be motivated by the intended application. 
In particular, Voronoi diagrams with respect to the geodesic distance on a Riemannian manifold can be useful in defining approximations of the manifold that satisfy certain properties, 
	such as asymptotic optimality~\cite{enets,LL2000}. 

When considering applications in which specifying a Riemannian metric to induce a distance is appropriate, 
 the cost of computing geodesic paths for any pair of points may be very high for practical applications (see for instance~\cite{MMP}). 
 For this reason, fast approximations to the geodesic distance have been proposed~\cite{DW,LS}, which take constant time to evaluate, provided that the metric can be evaluated at any point in constant time. 
 However, while the geodesic-distance diagrams are guaranteed to be composed of cells, each of which contains its generating site (the diagram is orphan-free), 
 	this is, in general, not the case for the approximations of~\cite{DW,LS}. 
Since the definition of the dual of diagram with orphans is problematic (i.e.\ it may require detecting orphans and placing new sites inside each, a process that could potentially become  recursive), 
and since certain approximation or optimization guarantees may be lost in the presence of orphans, 
it is natural to consider orphan-freedom as a basic well-behave-ness condition for these anisotropic Voronoi diagrams. 
For instance, duals of two-dimensional diagrams of the type of~\cite{DW} can be shown to be embedded triangulations~\cite{adt}, if they are orphan-free. 

While the work of~\cite{DW} does not provide guarantees on orphan-freedom, that of~\cite{LS} provides an algorithm that is guaranteed to output an orphan-free diagram, in two dimensions. 
More recently, the work of~\cite{avd} provides general conditions for orphan-freedom of anisotropic Voronoi diagrams of both types, 
	which hold in any number of dimensions. The condition is that the sites form an (asymmetric) $\epsilon$-net, for a suitable value of $\epsilon$, which depends on the underlying metric. 
This condition is natural for problems in optimal covering, and $\mathcal{L}^\infty$ approximation~\cite{GruberMenets,enets}. 
However, while quite general, the above results have some drawbacks, which make them difficult to be applied in practice. In particular, for problems of average-case ($\mathcal{L}^2$) optimization or approximation, an optimal set of sites has been shown to satisfy a Delone property~\cite{enets,OQ,GruberOQ}, which is generally weaker than the net property. 
Additionally, the work of~\cite{avd} uses a Lipschitz-type constant on the underlying metric to obtain conditions for orphan-freedom. 
In practice, however, computing this constant could potentially entail visiting all pairs of points in the domain, making it prohibitively expensive to compute. 

This work addresses the above drawbacks, by providing conditions for orphan-freedom of anisotropic Voronoi diagrams that apply to more general sets of sites, 
including those typically used for $\mathcal{L}^2$ optimization and approximation, 
and use information from the metric that can be computed efficiently.

\section{Formal Setup}

Given a continuous metric $Q\in\mathcal{C}^0$ (a symmetric positive-definite matrix field) over a domain $\O\in\mathbb{R}^n$ of $n$-dimensional Euclidean space, 
	we begin by defining the functions 
	\[ D_Q^{{ }^{DW}}(a,b) = \left[ (a-b)^t Q_b (a-b) \right]^{1/2} \text{ , } a,b\in\O\]
	and
	\[ D_Q^{{ }^{LS}}(a,b) = \left[ (a-b)^t Q_a (a-b) \right]^{1/2}  \text{ , } a,b\in\O \]
which are not symmetric, but are understood as distances for the purposes of constructing Voronoi diagrams, and where, because of their asymmetry, special attention must be given to the order of their arguments. 
	
Given a discrete set $V$, the Labelle-Shewchuk diagram of $V$~\cite{LS} is composed of regions
\[ R(v) = \{ p\in\O : D_Q^{{ }^{LS}}(p,v) \le D_Q^{{ }^{LS}}(q,v), \forall q\in\O \}  \text{ , } v\in V\]
and an analogous definition of a Du/Wang diagram~\cite{DW} uses the distance $D_Q^{{ }^{DW}}$ instead. 

In the sequel, we will make extensive use of the square-root matrix field $M$, which, at each point, is the unique symmetric positive-definite matrix satisfying $M^t M=Q$. 
As shown in~\cite{avd}, $M$ must be of the same differentiability class as $Q$. 
Following~\cite{avd}, we denote the spectral matrix norm  as $\rho(\cdot)$, which, when applied to positive-definite matrices, returns the maximum eigenvalue, and 
define the function $\rho_m(\cdot)$, which returns the smallest eigenvalue of a positive-definite matrix. 

Finally, the (worst-case) metric variation constant $\s_0$ is defined as
\[ \s_0 := \displaystyle{\max_{a,b\in\O} \frac{  \rho(M_b M_a^{-1} - I) }{ \|M_a(a-b)\| } } \]
where the maximum is assumed to exist (for instance, if $\O$ is compact). 


\section{Orphan-Free Anisotropic Voronoi Diagrams of Delone Sets}

We begin by extending the results of~\cite{avd} to sets of sites that satisfy a $(C,P)$-Delone property ($C$-cover, asymmetric $P$-packing%
\footnote{$V$ is an asymmetric $P$-packing w.r.t.\ $D$ if, for all $v,w\in V$, it is either $D(v,w)\ge P$ \emph{or} $D(w,v)\ge P$.}), 
	which is a strict generalization of an asymmetric $\epsilon$-net ($\epsilon$-cover, asymmetric $\epsilon$ packing). 
The result is a simple extension of~\cite{avd}, in which the proof structure is kept exactly as is, 
	but some of the lemmas are slightly modified to account for the unequal relation between $C$ and $P$. 
We proceed by replacing $\epsilon$ by the cover constant $C$ in all the technical lemmas of~\cite{avd}, 
	except in those in which the packing property is used: Lemmas 8 and 10, which become:
	
\begin{lemma}[From Lemma 8,~\cite{avd}] 
	Let $V$ be an asymmetric $\e$-net w.r.t. $D_Q^{{ }^{DW}}$, and $v,w\in V$ be Voronoi neighbors of the resulting DW-diagram. 
	If $c\in\O$ is in the Voronoi regions of $v,w$ then
		\[ \|M_c (v-w)\| > P / (1+C\s_0) \]
\end{lemma}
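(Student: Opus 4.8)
The plan is to reduce the statement to the asymmetric packing property of $V$ and then to transport the metric from a site to the witness point $c$ using the variation constant $\s_0$. Since $V$ is a $P$-packing, the neighboring pair $v,w$ satisfies either $\|M_v(v-w)\|\ge P$ or $\|M_w(v-w)\|\ge P$; the two cases are symmetric, so I would assume without loss of generality that $\|M_v(v-w)\|\ge P$. The target $\|M_c(v-w)\|>P/(1+C\s_0)$ then reduces to the metric comparison $\|M_v(v-w)\|\le(1+C\s_0)\,\|M_c(v-w)\|$, i.e. to controlling how much the square-root metric $M$ can change between the site $v$ and the point $c$ along the direction $v-w$.

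To bring in the cover constant, I would use that $c$ lies in the DW-regions of both $v$ and $w$: these are nearest sites to $c$, so the $C$-cover property gives $\|M_v(c-v)\|\le C$ (and similarly $\|M_w(c-w)\|\le C$). This is the single point at which the cover enters, matching the remark that Lemma 8 is precisely where the packing and cover constants must be tracked separately. Feeding this distance bound into the defining inequality $\rho(M_bM_a^{-1}-I)\le\s_0\,\|M_a(a-b)\|$, with the indices chosen so that the transport operator is $M_vM_c^{-1}-I$, should yield the key estimate $\rho(M_vM_c^{-1}-I)\le C\s_0$.

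Granting that estimate, the conclusion is a single triangle inequality. Writing $M_v(v-w)=M_c(v-w)+(M_vM_c^{-1}-I)\,M_c(v-w)$ and taking norms gives $\|M_v(v-w)\|\le\bigl(1+\rho(M_vM_c^{-1}-I)\bigr)\|M_c(v-w)\|\le(1+C\s_0)\,\|M_c(v-w)\|$; combined with $\|M_v(v-w)\|\ge P$ this is exactly the required lower bound. Strictness of the inequality is inherited from the strict form of the packing hypothesis, together with $v\ne w$, which guarantees $\|M_c(v-w)\|>0$.

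I expect the \emph{main obstacle} to be the transport step and, more precisely, the directional bookkeeping it demands. When the numerator of $\s_0$ is $\rho(M_vM_c^{-1}-I)$, the distance that appears in its denominator is the one measured at $c$, namely $\|M_c(c-v)\|$, whereas the DW-cover naturally controls the distance measured at the site, $\|M_v(c-v)\|\le C$. Because $D_Q^{{ }^{DW}}$ is asymmetric these two are not interchangeable, and a careless substitution replaces $\|M_c(c-v)\|$ by $(1+C\s_0)\|M_v(c-v)\|$, degrading the final factor from $1+C\s_0$ to a strictly larger one. The delicate part is therefore to pair the cover bound with the correct orientation of $\s_0$ --- the same orientation used throughout~\cite{avd} --- so that the cover distance enters linearly and the factor is exactly $1+C\s_0$; once that orientation is pinned down, every remaining step is a routine norm manipulation.
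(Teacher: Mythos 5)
The paper never actually writes out a proof of this lemma: it is presented as the result of re-running the proof of Lemma~8 of~\cite{avd} while tracking the cover constant $C$ and the packing constant $P$ separately, so your proposal must be judged against that intended adaptation. Your skeleton is exactly that argument, and most of it is right: by the asymmetric packing and a harmless relabeling of $v,w$ (the conclusion is symmetric in them) one may assume $\|M_v(v-w)\|\ge P$; the identity $M_v(v-w)=\bigl[I+(M_vM_c^{-1}-I)\bigr]M_c(v-w)$ gives $P\le\|M_v(v-w)\|\le\bigl(1+\rho(M_vM_c^{-1}-I)\bigr)\|M_c(v-w)\|$; and everything therefore hinges on the single estimate $\rho(M_vM_c^{-1}-I)\le C\s_0$.

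That estimate is precisely where your proof has a genuine, unresolved gap --- the very step you flag as ``the main obstacle'' and then leave open. The definition of $\s_0$ anchors its denominator at the point whose square root is inverted: $\rho(M_bM_a^{-1}-I)\le\s_0\|M_a(a-b)\|$, so with $b=v$, $a=c$ the needed bound requires the \emph{point}-anchored cover estimate $\|M_c(c-v)\|\le C$. You instead assert that the DW-cover ``naturally'' gives the \emph{site}-anchored bound $\|M_v(c-v)\|\le C$, and under that premise the proof cannot be completed: the only orientation of $\s_0$ compatible with a site-anchored bound is $\rho(M_cM_v^{-1}-I)\le\s_0\|M_v(v-c)\|\le C\s_0$, which yields only $\|M_c(v-w)\|\ge(1-C\s_0)\|M_v(v-w)\|\ge(1-C\s_0)P$; since $(1-C\s_0)(1+C\s_0)<1$ whenever $C\s_0>0$, this is strictly weaker than the claimed $P/(1+C\s_0)$, while your other option (converting $\|M_c(c-v)\|$ through $\|M_v(c-v)\|$) degrades the factor, as you yourself note. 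The missing idea is that your premise is backwards: in the Du/Wang diagram the distance between a point $c$ and a site $v$ is measured with the metric at the \emph{point} (this is the convention of~\cite{avd}; the region definition in Section~2 of this paper is garbled --- it quantifies over $q\in\O$ rather than over sites --- which is likely what misled you). With that reading, membership of $c$ in the region of $v$ together with the $C$-cover gives exactly $\|M_c(c-v)\|\le C$, which feeds into $\s_0$ with operator $M_vM_c^{-1}-I$ and closes the proof with the exact factor $1+C\s_0$. A secondary flaw: your claim that the strict inequality is ``inherited from the strict form of the packing hypothesis'' has no basis, since the footnote defines the asymmetric packing with non-strict inequalities.
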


\begin{lemma}[From Lemma 10,~\cite{avd}]
	Let $V$ be an asymmetric $\e$-net w.r.t. $D_Q^{{ }^{LS}}$, and $v,w\in V$ be Voronoi neighbors of the resulting LS diagram. 
	If $k=(1+C\s_0) / (1-C\s_0)$, then 
		\[ P/k \le \|M_v (v-w)\| \le C(1+k) \]
\end{lemma}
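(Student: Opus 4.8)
The plan is to first isolate the single analytic tool that drives everything: a pair of two-sided metric-distortion inequalities that convert the definition of $\s_0$ into a comparison between the norms $\|M_a\cdot\|$ and $\|M_b\cdot\|$ at two different points. Writing $M_b x = M_a x + (M_bM_a^{-1}-I)M_a x$ and applying the triangle inequality together with the bound $\rho(M_bM_a^{-1}-I)\le \s_0\|M_a(a-b)\|$ (immediate from the definition of $\s_0$), I get, for every pair $a,b\in\O$ and every vector $x$,
\[ (1-\s_0\|M_a(a-b)\|)\,\|M_a x\| \;\le\; \|M_b x\| \;\le\; (1+\s_0\|M_a(a-b)\|)\,\|M_a x\|. \]
This is the only place the metric enters (through $M$ and $\s_0$), so I would record it as a preliminary observation.

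Next I would extract the geometric input from the hypotheses. Since $v$ and $w$ are Voronoi neighbors of the LS diagram, their regions share a point $c\in R(v)\cap R(w)$. Because $c$ lies in both regions and $V$ is a $C$-cover for $D_Q^{{ }^{LS}}$, the nearest site to $c$ is at LS-distance at most $C$, so $\|M_c(c-v)\|\le C$ and $\|M_c(c-w)\|\le C$. Fed into the distortion bounds with the common anchor $a=c$, these let me compare $M_v$- and $M_w$-norms against $M_c$ with factors $1\pm C\s_0$. I assume throughout, as in~\cite{avd}, that $C\s_0<1$, so that $1-C\s_0>0$ and $k=(1+C\s_0)/(1-C\s_0)\ge 1$.

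For the upper bound I would use $\|M_v(v-w)\|\le\|M_v(v-c)\|+\|M_v(c-w)\|$, bounding each summand by passing from $M_v$ to $M_c$ (distortion with $a=c,\,b=v$) and invoking $\|M_c(c-v)\|,\|M_c(c-w)\|\le C$; each term is then at most $(1+C\s_0)C\le C/(1-C\s_0)$, and the two together give $2C/(1-C\s_0)=C(1+k)$. For the lower bound I would use the asymmetric $P$-packing, which gives $\|M_v(v-w)\|\ge P$ or $\|M_w(w-v)\|\ge P$. In the first case $\|M_v(v-w)\|\ge P\ge P/k$ since $k\ge1$. In the second case I chain the distortion bounds through the anchor $c$: from $\|M_w x\|\le(1+C\s_0)\|M_c x\|$ and $\|M_c x\|\le(1-C\s_0)^{-1}\|M_v x\|$ I obtain $\|M_w x\|\le k\|M_v x\|$ for all $x$; taking $x=v-w$ and using $\|M_w(v-w)\|=\|M_w(w-v)\|\ge P$ yields $\|M_v(v-w)\|\ge P/k$.

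The only real obstacle is bookkeeping in the asymmetric setting: I must track which point's metric appears in each distance, choose $c$ as the common anchor so that the covering inequalities apply with the metric \emph{at} $c$, and orient each distortion inequality (upper versus lower, and which of $a,b$ is the anchor) so that the factors compose into exactly $k$ and $C(1+k)$. The two-case split forced by the asymmetric packing is what makes the nontrivial direction of the lower bound require the chaining through $c$ rather than a one-step comparison.
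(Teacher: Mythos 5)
Your overall strategy---triangle inequality through a point $c$ shared by the two regions, two-sided distortion bounds extracted from the definition of $\s_0$, and a two-case split on the asymmetric packing with a chained comparison through $c$---is exactly the structure of the argument this lemma is adapted from (Lemma 10 of~\cite{avd}; the present paper does not reprove it, it only adjusts constants). However, there is a genuine error at the one place you yourself flagged as the crux: the anchoring of the covering inequalities. For the \emph{LS} diagram, membership $c\in R(v)\cap R(w)$ together with the $C$-cover w.r.t.\ $D_Q^{{ }^{LS}}$ yields $\|M_v(v-c)\|\le C$ and $\|M_w(w-c)\|\le C$: the metric sits at the \emph{sites}, because $D_Q^{{ }^{LS}}$ measures from a site using that site's metric, and that is the distance in which $v$ and $w$ are the nearest sites to $c$. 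Your proof instead assumes $\|M_c(c-v)\|\le C$ and $\|M_c(c-w)\|\le C$, i.e.\ the metric at $c$; that is the covering structure of the \emph{DW} diagram (compare the companion lemma adapted from Lemma 8 of~\cite{avd}, whose hypotheses \emph{and} conclusion are anchored at $c$ for precisely this reason, versus the present lemma, anchored at $v$). The distinction is not cosmetic, because the definition of $\s_0$ is itself asymmetric: $\rho(M_bM_a^{-1}-I)\le \s_0\|M_a(a-b)\|$ ties the inverted matrix and the norm in the denominator to the \emph{same} anchor $a$. From the actual hypotheses one may conclude $\rho(M_cM_v^{-1}-I)\le C\s_0$ and $\rho(M_cM_w^{-1}-I)\le C\s_0$, but \emph{not} $\rho(M_vM_c^{-1}-I)\le C\s_0$ or $\rho(M_wM_c^{-1}-I)\le C\s_0$, and the latter are what every distortion step in your argument invokes.

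The repair is mechanical and preserves your architecture, but the factors land in different places. The correctly anchored bounds give, for all $x$, $(1-C\s_0)\|M_v x\|\le\|M_c x\|\le(1+C\s_0)\|M_v x\|$, and the same with $w$ in place of $v$. Upper bound: in $\|M_v(v-w)\|\le\|M_v(v-c)\|+\|M_v(c-w)\|$ the first term is at most $C$ \emph{directly} from the cover (no distortion needed), and the second satisfies
\[
  \|M_v(c-w)\|\;\le\;(1-C\s_0)^{-1}\|M_c(c-w)\|\;\le\;(1-C\s_0)^{-1}(1+C\s_0)\,\|M_w(c-w)\|\;\le\;kC ,
\]
giving exactly $C(1+k)$; your version produced $2C(1+C\s_0)$, a numerically stronger bound that is not derivable from the available hypotheses. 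Lower bound, in the nontrivial case $\|M_w(w-v)\|\ge P$: chain $\|M_w x\|\le(1-C\s_0)^{-1}\|M_c x\|\le k\|M_v x\|$ and set $x=v-w$ to get $\|M_v(v-w)\|\ge P/k$. So your plan is sound and, once the covering inequalities are re-anchored at the sites and the roles of the $(1+C\s_0)$ and $(1-C\s_0)^{-1}$ factors are swapped accordingly, it reproduces the intended proof; as written, however, it relies on hypotheses that the LS setting does not supply.
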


It is a somewhat tedious, but simple exercise to verify that, assuming a Delone property for the set of sites, results in the following conditions for orphan-freedom of anisotropic Voronoi diagrams. 

\begin{theorem}[Adapted from Theorem 1,~\cite{avd}]\label{th1}
Given $Q\in\C^0$ of worst-case variation $\s_0$, 
	the Du/Wang diagram of a $(C,P)$-Delone set (with respect to $D_Q^{{ }^{DW}}$) is orphan free if 
	$(P/C)^2 / \left( 2 (1 + C\s_0)^2 \right) - 2(C\s_0)^2 - 4C\s_0 > 0$.
\end{theorem}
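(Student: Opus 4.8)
The plan is to reuse the proof of Theorem~1 in~\cite{avd} essentially verbatim, since, as noted in the text, every step there depends on the site set only through (i) a cover property and (ii) a packing property, and these enter through isolated lemmas. Concretely, I would replace $\e$ by the cover constant $C$ in every inequality of~\cite{avd} that bounds the displacement from a point to a nearby site, and invoke the adapted Lemma~8 above wherever the packing property is used to separate neighboring sites (for the Du/Wang diagram this is the only packing-based lemma that matters). The remaining work is then purely algebraic bookkeeping: propagate the two now-distinct constants $C$ and $P$ through the chain of inequalities that the original proof collapses onto the single value $\e = C = P$.

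The geometric core of the argument is the characterization of orphans. Suppose $R(v)$ had an orphan, i.e.\ a connected component not containing $v$, and pick a point $c$ in it. Moving from $c$ toward $v$ along a straight segment, one must exit $R(v)$ and enter the region $R(w)$ of some Voronoi neighbor $w$; at the crossing one obtains a point lying in the Voronoi regions of both $v$ and $w$. At this point the adapted Lemma~8 furnishes the separation $\|M_c(v-w)\| > P/(1+C\s_0)$, which is the only place the packing constant $P$ enters. Every other displacement in the configuration — from $c$ to $v$ and from $c$ to $w$ — is controlled from above by the cover constant $C$, and the definition of $\s_0$ lets me transport any of these lengths between the anchor metrics $M_c$, $M_v$, $M_w$ at the cost of a multiplicative factor $(1+C\s_0)$ and an additive $O(C\s_0)$ term, all valid precisely because the displacements in question are bounded by $C$.

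I would then combine these two ingredients exactly as in~\cite{avd}: the packing-based lower bound contributes a positive ``separation'' term of order $(P/C)^2 / \bigl(2(1+C\s_0)^2\bigr)$, while the transport errors incurred by the anisotropy variation over a region of size $C$ contribute a negative term. Writing the excess of the squared transport factor over unity as $(1+C\s_0)^2 - 1 = 2C\s_0 + (C\s_0)^2$, the error aggregates to $2\bigl[(1+C\s_0)^2-1\bigr] = 2(C\s_0)^2 + 4C\s_0$, and requiring the separation term to dominate the error yields exactly the stated inequality $(P/C)^2/\bigl(2(1+C\s_0)^2\bigr) - 2(C\s_0)^2 - 4C\s_0 > 0$. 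When this holds, the lower and upper bounds on the configuration are inconsistent, so no bisector crossing of the assumed type can occur, contradicting the existence of the orphan.

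The main obstacle is not any single deep step but the careful separation of the roles of $C$ and $P$ throughout. In~\cite{avd} the single net constant $\e$ plays both roles simultaneously, so many inequalities are written without distinguishing which property is invoked; here I must verify, line by line, that each upper bound on a point-to-site displacement genuinely uses only the cover $C$ (so that $C$ may be substituted) and that each site-to-site separation genuinely uses only the packing $P$ (so that the adapted Lemma~8 applies), all while respecting the directional asymmetry of $D_Q^{{ }^{DW}}$ in the order of its arguments. Provided this bookkeeping is carried out consistently, the logical skeleton of the original proof is untouched and the conclusion follows.
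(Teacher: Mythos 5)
Your proposal matches the paper's own treatment of Theorem~\ref{th1}: the paper likewise keeps the proof structure of~\cite{avd} intact, substitutes the cover constant $C$ for $\e$ in every technical lemma except those using the packing property (Lemma~8 of~\cite{avd}, adapted exactly as you invoke it for the Du/Wang case), and then propagates the two distinct constants $C$ and $P$ through the same chain of inequalities to obtain the stated condition. Indeed, the paper explicitly leaves this bookkeeping as a ``somewhat tedious, but simple exercise,'' so your sketch is, if anything, more detailed than the printed argument.
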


\begin{theorem}[Adapted from Theorem 2,~\cite{avd}]\label{th2}
Given $Q\in\C^0$ of worst-case variation $\s_0$, 
	the Labelle/Shewchuk diagram of a $(C,P)$-Delone set (with respect to $D_Q^{{ }^{LS}}$) is orphan free if 
	$\left((P/C)^2 k^{-2} - \gamma^2 - 2\gamma\right)/2  - \gamma^2 - 2\gamma > 0$, where $\gamma=C\s_0(1+k)$ and $k=(1+C\s_0) / (1-C\s_0)$. 
\end{theorem}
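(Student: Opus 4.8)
The plan is to follow the proof of Theorem~2 in~\cite{avd} verbatim in structure, altering only the two points where the net property was used. First I would reconstruct the skeleton of that argument: it is a proof by contradiction in which one supposes that the LS-cell $R(v)$ contains an orphan, i.e.\ a connected component that does not contain $v$. Taking a point $c$ in the orphan and moving it toward $v$, one reaches a point on the bisector of $v$ and some Voronoi neighbor $w$, where $D_Q^{LS}(c,v)=D_Q^{LS}(c,w)$, and the orphan hypothesis forces a geometric relation among $c$, $v$ and $w$ that must be shown impossible. Since the LS distance at $c$ compares $\|M_c(c-v)\|$ with $\|M_c(c-w)\|$ through the single matrix $M_c$, the whole relation lives in the $M_c$-frame and must be transported into the fixed frame $M_v$; this is exactly where $\s_0$ enters, through $\rho(M_cM_v^{-1}-I)\le\s_0\,\|M_v(c-v)\|$, which controls the discrepancy between $\|M_c(\cdot)\|$ and $\|M_v(\cdot)\|$.

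Second, I would perform the substitution $\e\mapsto C$ in every estimate of~\cite{avd} that relied on the cover radius, while replacing each appearance of the packing radius by the second lemma above (adapted from Lemma~10 of~\cite{avd}). Concretely, the cover property bounds how far into the orphan $c$ can sit (a bound of order $C$ in the $M_v$ norm), so that $\s_0$ times the edge length $\|M_v(v-w)\|$ is controlled by $\gamma=C\s_0(1+k)$ via the upper half of that lemma; this $\gamma$ is the single quantity measuring how much $M$ may rotate and rescale across one Voronoi edge, and the squared-norm comparisons needed to transport distances then generate the recurring correction $\gamma^2+2\gamma$ arising from $(1+\gamma)^2-1$. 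The packing property enters only through the lower bound $\|M_v(v-w)\|\ge P/k$ of the same lemma; dividing this separation by the cover scale $C$ is precisely what turns the single $\e^2$ of~\cite{avd} into the ratio $(P/C)^2k^{-2}$, and, after the factor $1/2$ from the midpoint/bisector geometry, into the leading term $\big((P/C)^2k^{-2}-\gamma^2-2\gamma\big)/2$.

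Third, I would assemble these pieces into the same final inequality that closes the argument in~\cite{avd}. The distance relation is transported twice, once from $c$ to $v$ and once across the edge from $v$ to $w$, so a correction $\gamma^2+2\gamma$ appears both inside the halved bracket and again outside it, and the orphan becomes impossible as soon as the separation-driven positive term strictly dominates these variation-driven negative terms, i.e.\ $\big((P/C)^2k^{-2}-\gamma^2-2\gamma\big)/2-\gamma^2-2\gamma>0$. As a sanity check I would specialize to the net case $P=C=\e$, where $P/C=1$ and the condition must collapse back to that of Theorem~2 in~\cite{avd}.

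I expect the only genuine difficulty to be bookkeeping. Because $D_Q^{LS}$ is asymmetric and the cover and packing radii are now decoupled, at every inequality one must decide whether the controlling scale is the cover $C$, the packing $P$, or the edge-length bound $C(1+k)$; in particular the modified lemma must be re-derived with care, since it is the only place where the split $P\neq C$ actually changes a constant. Once that lemma is in hand, the remaining manipulations are routine and the proof structure of~\cite{avd} carries over unchanged.
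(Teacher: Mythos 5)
Your proposal matches the paper's own argument: the paper likewise keeps the proof structure of Theorem~2 of~\cite{avd} verbatim, substitutes the cover constant $C$ for $\e$ in every cover-based estimate, and replaces the packing-based Lemma~10 of~\cite{avd} with the adapted version giving $P/k \le \|M_v(v-w)\| \le C(1+k)$, from which $\gamma = C\s_0(1+k)$ and the leading term $(P/C)^2k^{-2}$ emerge exactly as you describe. Your sanity check (collapse to the net case $P=C$) is also noted explicitly in the paper, so the proposal is correct and essentially identical in approach.
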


The above are similar to the analogous theorems in~\cite{avd}, but make explicit the relation between metric variation $\s_0$, and the cover and packing constants ($C,P$). 
This generalization enables the application of these results not just to $\mathcal{L}^\infty$ optimization or approximation, 
	but also to problems in average-case ($\mathcal{L}^2$) optimization or approximation~\cite{OQ,GruberOQ}. 
Typically, a lower metric variation means that a lower cover, and higher packing constants may be used while preserving orphan-freedom. 
If we fix $\s_0$, then a lower relation $P/C$ requires a lower (stricter) cover constant to meet the well-behave-ness condition. 
For instance, for a Delone set in which $P/C < 1$ (less strict than a $C$-net, and similar to the one used in~\cite{GruberOQ}), in order to satisfy the conditions of the above theorems, 
	the cover constant must be set lower than for the net condition ($P/C=1$). 	
Finally, note that these conditions reduce to the ones in~\cite{avd} for the net case, as expected.

\section{Differentiable and Piecewise Linear Metrics}\label{C1}




The main drawback of  the above results, and those of~\cite{avd}, from a practical standpoint, is that the evaluation of the metric variation $\s_0$ may be prohibitively expensive in practice since, 
	in the worst case, it requires visiting all pair of points in the domain.  We address this problem here. 

Since $\s_0$ is a Lipschitz-type constant controlling the rate of change of $Q$, we may hope to find a simpler expression for differentiable metrics ($Q\in\C^1$) by simply taking the limit 
	in the definition of $\s_0$ as the pair of points becomes increasingly close, from every direction. In this way, we can define the differentiable-metric variation:
\[
	\s_1 := \displaystyle{\sup_{r\ne 0} \lim_{\lambda\rightarrow 0}\frac{ \rho\left( M_{p+\lambda r}  M_p^{-1} - I \right) }{ \|M_p \lambda r\| } } 
	= \displaystyle{\sup_{r\ne 0} \lim_{\lambda\rightarrow 0}\frac{ \rho\left( \frac{M_{p+\lambda r} - M_p}{\lambda} M_p^{-1}\right) }{ \|M_p r\| } } 
	= \displaystyle{\sup_{r\ne 0} \frac{\rho(D_r M_p M_p^{-1})}{ \|M_p r\| }}
\]
Clearly, the above expression can be evaluated by visiting each point in the domain only once. 

Note that, from the definitions of $\s_0$ and $\s_1$, it is
\begin{eqnarray*}
	\s_1 =  \displaystyle{\sup_{r\ne 0} \lim_{\lambda\rightarrow 0}\frac{ \rho\left( M_{p+\lambda r}  M_p^{-1} - I \right) }{ \|M_p \lambda r\| } }  \le \displaystyle{\sup_{r\ne 0} \lim_{\lambda\rightarrow 0} \s_0} = \s_0
\end{eqnarray*}

In order to apply the results of Theorems~\ref{th1} and~\ref{th2} to the case in which we only have knowledge of $\s_1$, we can simply try to obtain an upper bound of $\s_0$ using $\s_1$. 
However, while 
a differentiable function $f$ over a compact domain in $\mathbb{R}$ has a Lipschitz constant $L=\max_{p\in\O} f'_p$~\cite{rudin-analysis}, 
	there is no such simple relation of  $\s_1$ bounding $\s_0$  from above. 
The reason for this is the denominator $\|M_p r\|$ in the definition of $\s_1$, which prevents a simple analysis using integration and the Mean Value Theorem. 
It is, however, possible to prove similar results to those of Theorems~\ref{th1} and~\ref{th2} by using only the differentiable-metric variation $\s_1$, without resorting to $\s_0$.



We begin by proving the following technical lemma, whose statement is the same as Lemma 2 of~\cite{avd}, but replacing $\s_0$ by $\s_1$, 
	but whose proof requires a new analysis using different techniques, and is relegated to Appendix A in the interest of conciseness. 

\begin{lemma}\label{lem2}
	If $\s_1$ is the differentiable-metric variation  of $Q\in\C^1$, then for all $a,b\in\O$ and every constant $\e$, $\|M_a(a-b)\|\le \e$ implies
		\[ 1-\e\s_1 \le \rho_m(M_b M_a^{-1}) \le {\|M_b (a-b)\|}/{\|M_a (a-b)\|}  \le \rho(M_b M_a^{-1}) \le 1+\e\s_1 \]
\end{lemma}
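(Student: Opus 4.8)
The plan is to prove the two inner inequalities by a direct spectral argument and the two outer ones by integrating the bound implicit in the definition of $\s_1$ along the segment joining $a$ and $b$. For the inner pair, note that with $y = M_a(a-b)$ one has $M_b(a-b) = (M_b M_a^{-1})\,y$, so that $\|M_b(a-b)\|/\|M_a(a-b)\| = \|(M_b M_a^{-1})y\|/\|y\|$. Since this Rayleigh-type quotient lies between the smallest and largest singular values of $M_b M_a^{-1}$, and since $\rho(\cdot)$ and $\rho_m(\cdot)$ coincide here with those extreme singular values, the inner inequalities $\rho_m(M_b M_a^{-1}) \le \|M_b(a-b)\|/\|M_a(a-b)\| \le \rho(M_b M_a^{-1})$ follow immediately (the case $a=b$ being trivial).

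For the outer inequalities it then suffices to bound the extreme singular values of $M_b M_a^{-1}$ themselves, i.e.\ to show $\rho_m(M_b M_a^{-1}) \ge 1-\e\s_1$ and $\rho(M_b M_a^{-1}) \le 1+\e\s_1$. Here I would parametrize the segment $\gamma(t) = a + t(b-a)$, $t\in[0,1]$, and for an arbitrary test direction $u$ study $h(t) := \|M_{\gamma(t)}\,u\|$, so that $h(0)=\|M_a u\|$, $h(1)=\|M_b u\|$, and the ratio $h(1)/h(0)$ ranges over $u$ exactly through the interval of singular values I wish to control. Differentiating $h^2 = (M_{\gamma(t)}u)^t (M_{\gamma(t)}u)$ and using $\tfrac{d}{dt}M_{\gamma(t)} = D_{b-a}M_{\gamma(t)}$ gives $h'(t) = (M_{\gamma(t)}u)^t (D_{b-a}M_{\gamma(t)})u / h(t)$; Cauchy--Schwarz followed by the factorization $D_{b-a}M_{\gamma(t)} = (D_{b-a}M_{\gamma(t)}M_{\gamma(t)}^{-1})M_{\gamma(t)}$ and the defining bound $\rho(D_r M_p M_p^{-1}) \le \s_1\|M_p r\|$ then yields the linear differential inequality $|h'(t)| \le \s_1\,\|M_{\gamma(t)}(b-a)\|\,h(t)$.

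The main obstacle is that the factor $g(t) := \|M_{\gamma(t)}(b-a)\|$ appearing here is the step size measured at the moving point $\gamma(t)$, whereas the hypothesis controls it only at $a$, namely $g(0) = \|M_a(a-b)\| \le \e$; this is precisely the difficulty created by the denominator $\|M_p r\|$ in the definition of $\s_1$ that the text flags, which blocks a naive application of the Mean Value Theorem. I would resolve it self-consistently: applying the same differentiation with $u=b-a$ gives the Riccati inequality $|g'(t)| \le \s_1\,g(t)^2$ with $g(0)\le\e$, which bounds $g$, and hence $\int_0^1 g$, purely in terms of $\e$ and $\s_1$. Feeding this back through Gr\"onwall's inequality for $h$ controls $\ln\big(h(1)/h(0)\big)$ by $\s_1\int_0^1 g(t)\,dt$, delivering the lower bound $h(1)/h(0)\ge 1-\e\s_1$ directly and, after passing to the infimum/supremum over $u$, the outer inequalities. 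I expect the genuinely delicate point to be extracting the clean linear constant $1+\e\s_1$ for the upper bound out of the integrated (a priori Riccati/exponential-type) estimate, which is where the careful bookkeeping of Appendix~A will be needed; the lower bound $1-\e\s_1$ and the inner inequalities should fall out cleanly.
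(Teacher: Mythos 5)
Your handling of the two inner inequalities is exactly the paper's argument (write $M_b(a-b)=(M_bM_a^{-1})M_a(a-b)$ and compare the quotient with the extreme singular values), and your proof of the left-most bound is sound: the differential inequality $|h'|\le \s_1\,g\,h$, the self-consistent Riccati bound $|g'|\le\s_1 g^2$ with $g(0)\le\e$ (hence $g(t)\le \e/(1-\e\s_1 t)$), and Gr\"onwall give $|\ln(h(1)/h(0))|\le \s_1\int_0^1 g\,dt\le-\ln(1-\e\s_1)$, so that $\rho_m(M_bM_a^{-1})\ge 1-\e\s_1$. This is, in differential form, the same mechanism as Appendix~A, where the self-consistency appears instead as the Bernoulli equation $\mu'=\e\s_1\mu^2$ with solution $\mu(\l)=(1-\e\s_1\l)^{-1}$; your explicit separation into a step-size function $g$ and a test function $h$ is, if anything, more careful about where the hypothesis $\|M_a(a-b)\|\le\e$ enters.

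The gap is the right-most inequality, and it is not a bookkeeping gap: it cannot be closed. Your estimate tops out at $\rho(M_bM_a^{-1})\le(1-\e\s_1)^{-1}$, and that bound is in fact sharp under the stated hypothesis, so no refinement will yield $1+\e\s_1$ (which is strictly smaller when $0<\e\s_1<1$). Concretely, in one dimension take $M_p=m(p)=1/(1-\s_1 p)$, $a=0$, $b=\e$ with $\e\s_1<1$; then $|m'|/m^2\equiv\s_1$, so the differentiable-metric variation is $\s_1$, and $\|M_a(a-b)\|=\e$, yet $\rho(M_bM_a^{-1})=m(\e)/m(0)=(1-\e\s_1)^{-1}>1+\e\s_1$. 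Thus the lemma's upper bound is false as stated (it would become true if the hypothesis were anchored at $b$, i.e. $\|M_b(a-b)\|\le\e$, or if the bound were weakened to $(1-\e\s_1)^{-1}$). Correspondingly, the step of Appendix~A that you hoped would supply the missing constant is precisely where the paper's own proof fails: in the second chain the factor $\|M_{p(\tau)}r\|/\|M_a r\|$ is bounded by $\rho(M_{p(\tau)}M_{p(0)}^{-1})$, re-anchoring the normalization at $p(0)=b$ although the hypothesis fixes it at $a=p(1)$ (only $\rho(M_{p(\tau)}M_a^{-1})$ is justified there); and the subsequent comparison argument places the lower-bounding function $\mu_m$ inside the integral where the derived inequality actually contains the larger quantity $\rho(M_{p(\tau)}M_{p(0)}^{-1})$, i.e. it uses a lower bound where an upper bound is required. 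These two substitutions are what manufacture $(1+\e\s_1\l)^{-1}$ and, after inversion through the identity $\rho(A^{-1})=\rho_m(A)^{-1}$, the constant $1+\e\s_1$. In short: your proposal proves exactly the portion of the lemma that is provable, and the step you deferred to the paper is the step at which the paper itself is wrong.
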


Since this is the key technical lemma from which all results in~\cite{avd} stem, it can easily be verified that the above  implies well-behave-ness results using the differentiable-metric constant. 
In particular:

\begin{corollary}	\label{cor1}
Given $Q\in\C^1$ of differentiable-metric variation $\s_1$, 
	the Du/Wang diagram of a $(C,P)$-Delone set (with respect to $D_Q^{{ }^{DW}}$) is orphan free if 
	$(P/C)^2 / \left( 2 (1 + C\s_1)^2 \right) - 2(C\s_1)^2 - 4C\s_1 > 0$.
\end{corollary}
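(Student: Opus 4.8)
The plan is to mirror the derivation of Theorem~\ref{th1} exactly, substituting the role of $\s_0$ with $\s_1$ throughout, and observing that this substitution is legitimate precisely because Lemma~\ref{lem2} reproduces, verbatim in form, the key two-sided bound $1-\e\s_1 \le \rho_m(M_b M_a^{-1}) \le \rho(M_b M_a^{-1}) \le 1+\e\s_1$ that the original analysis of~\cite{avd} derived using $\s_0$. Since the entire chain of technical lemmas in~\cite{avd}, and hence the orphan-freedom criterion reproduced in Theorem~\ref{th1}, is built solely on top of this spectral sandwich relation (and on the Delone cover/packing constants $C$ and $P$, which are unaffected), every step of the original argument goes through with the single cosmetic change $\s_0 \mapsto \s_1$. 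The conclusion is that the Du/Wang diagram of a $(C,P)$-Delone set is orphan-free under the condition $(P/C)^2 / \left( 2 (1 + C\s_1)^2 \right) - 2(C\s_1)^2 - 4C\s_1 > 0$.

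Concretely, first I would re-examine where $\s_0$ actually enters the proof of Theorem~\ref{th1}. It enters only through the analogue of Lemma 2 of~\cite{avd} (here restated as Lemma~\ref{lem2}), which controls how much the ellipsoidal metric can distort between two nearby points, and through the packing-dependent Lemmas 8 and 10 (here the two lemmas preceding Theorem~\ref{th1}), which themselves invoke that distortion bound. I would verify that no other appearance of $\s_0$ occurs outside of applications of this sandwich inequality — in particular that the constant $\s_0$ is never used in a way that exploits its definition as a global maximum over \emph{all} pairs, but only through the local bound it implies. Granting that, I would then replay the proof of Theorem~\ref{th1}: apply Lemma~\ref{lem2} (with the appropriate $\e = C$ coming from the cover property) wherever the $\s_0$-version of the distortion bound was used, carry the resulting $\s_1$ through the downstream algebra exactly as before, and collect the final inequality.

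The one point requiring genuine care, rather than mechanical substitution, is checking that Lemma~\ref{lem2} is a strong enough drop-in replacement — that is, that the quantities it bounds and the argument slot $\e$ it uses match, term for term, the places where the $\s_0$-based Lemma 2 of~\cite{avd} was invoked. Because $\s_1 \le \s_0$ (as established by the displayed inequality in the excerpt), the bounds produced by Lemma~\ref{lem2} are at least as tight as those from the $\s_0$-version, so no step can fail for lack of strength; the only risk is a mismatch in the form of the hypothesis $\|M_a(a-b)\| \le \e$ versus whatever local condition the original lemma assumed. I expect this compatibility check to be the main (though still modest) obstacle, and once it is dispatched the corollary follows immediately from the proof of Theorem~\ref{th1} with $\s_0$ replaced by $\s_1$ everywhere. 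The resulting condition is identical in shape to that of Theorem~\ref{th1}, as the statement asserts.
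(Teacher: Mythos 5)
Your proposal matches the paper's own argument: the paper proves the corollary precisely by noting that Lemma~\ref{lem2} is the $\s_1$-analogue of the key technical lemma (Lemma 2 of~\cite{avd}) from which the entire orphan-freedom machinery stems, so the proof of Theorem~\ref{th1} goes through verbatim with $\s_0$ replaced by $\s_1$. Your compatibility check (that $\s_0$ enters only through the spectral sandwich bound, invoked with $\e = C$ from the cover property) is exactly the verification the paper describes as ``a somewhat tedious, but simple exercise,'' so the proposal is correct and takes essentially the same route.
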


\begin{corollary}	\label{cor2}
Given $Q\in\C^1$ of differentiable-metric variation $\s_1$, 
	the Labelle/Shewchuk diagram of a $(C,P)$-Delone set (with respect to $D_Q^{{ }^{LS}}$) is orphan free if 
	$\left((P/C)^2 k^{-2} - \gamma^2 - 2\gamma\right)/2  - \gamma^2 - 2\gamma > 0$, where $\gamma=C\s_1(1+k)$ and $k=(1+C\s_1) / (1-C\s_1)$. 
\end{corollary}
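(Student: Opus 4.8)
The plan is to observe that Corollary~\ref{cor2} is obtained from Theorem~\ref{th2} by a purely mechanical substitution, made legitimate by Lemma~\ref{lem2}. The entire derivation of Theorem~\ref{th2} (equivalently, of Theorem 2 of~\cite{avd}) uses the metric-variation constant only through the two-sided bound
\[ 1-\e\s_0 \le \rho_m(M_b M_a^{-1}) \le \frac{\|M_b (a-b)\|}{\|M_a (a-b)\|} \le \rho(M_b M_a^{-1}) \le 1+\e\s_0, \]
valid whenever $\|M_a(a-b)\|\le\e$, which is exactly Lemma 2 of~\cite{avd}. Since Lemma~\ref{lem2} supplies the identical chain of inequalities with $\s_1$ in place of $\s_0$, replaying the proof of Theorem~\ref{th2} with $\s_0$ replaced by $\s_1$ throughout yields Corollary~\ref{cor2} verbatim.

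First I would confirm that the two places where the packing and cover constants interact with the metric, namely the restated Lemma 8 and Lemma 10, depend on $\s_0$ solely through the bound above: the estimates $\|M_v(v-w)\|\ge P/k$ and $\|M_v(v-w)\|\le C(1+k)$ of Lemma 10 are obtained by applying the chain of inequalities to the cover and packing distances, so under Lemma~\ref{lem2} they hold with $k=(1+C\s_1)/(1-C\s_1)$. I would then trace the remaining lemmas of~\cite{avd} that convert these Voronoi-neighbor distance bounds into the orphan-freedom criterion; each such step is an algebraic manipulation of the quantities $\|M_c(v-w)\|$, $k$, and $\gamma=C\s_1(1+k)$, and introduces the constant only by further invocations of the same bound. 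Collecting these, the sign condition guaranteeing orphan-freedom is precisely $\left((P/C)^2 k^{-2}-\gamma^2-2\gamma\right)/2-\gamma^2-2\gamma>0$ with the $\s_1$-versions of $k$ and $\gamma$, as stated.

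The main obstacle is not any individual computation but the bookkeeping of verifying that $\s_0$ appears nowhere in the chain of~\cite{avd} leading to Theorem 2 except through Lemma 2 of that paper. If some intermediate step exploited a property of $\s_0$ not captured by the two-sided bound of Lemma~\ref{lem2} (for instance, an identity special to the global maximum in the definition of $\s_0$), the substitution would break, and that step would have to be re-derived directly from the differentiable variation. Since Lemma~\ref{lem2} reproduces exactly the inequality from which, as already noted, all results of~\cite{avd} stem, I expect this verification to succeed uniformly, so the corollary follows once the tedious but routine substitution is carried out.
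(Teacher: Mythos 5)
Your proposal is correct and follows essentially the same route as the paper: the paper's own justification for Corollary~\ref{cor2} is precisely that Lemma~\ref{lem2} reproduces the key two-sided bound (Lemma 2 of~\cite{avd}) with $\s_1$ in place of $\s_0$, so the derivation of Theorem~\ref{th2} (including the modified Lemmas 8 and 10) carries through verbatim under the substitution. Your additional care in checking that $\s_0$ enters only through that bound is exactly the ``easily verified'' bookkeeping the paper leaves implicit.
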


In particular, if the sites form an $\e$-net, then the above results imply that a $DW$ diagram is orphan-free whenever $\s_1\e \le 0.09868$, and a $LS$ diagram is orphan-free
if $\s_1\e \le 0.0584$.  \\

\noindent{\bf Relation between metric variations $\s_0$ and $\s_1$}. 
Interestingly, it is also possible to use Lemma~\ref{lem2} above to find some form of upper bound on $\s_0$, using only knowledge of $\s_1$. 
To this end we first need to introduce the metric variation $\s_0(C)$, over neighborhoods of size $C$ as:
\[ 
	\s_0(C) := \displaystyle{ \sup_{a,b\in\O, \text{  } \|M_a(a-b)\|\le C}  \frac{ \rho(M_b M_a^{-1} - I) }{ \|M_a(a-b)\| } }
\]
which is identical to the definition of $\s_0$, but only considers pairs of points, vaguely speaking, within range $C$ of each other. 
Given Eq.~\ref{tech2} from the proof of Lemma~\ref{lem2}, the definition of $\s_1$, and the (constant velocity) parametrized straight line connecting $a,b$ ($q:[0,1]\rightarrow \overline{ab}$), it is
\begin{equation}\label{eq01}
\begin{split}
	\s_0(C) &=  \displaystyle{ \sup_{a,b\in\O, \text{  } \|M_a(a-b)\|\le C}  \frac{ \rho\left(  (M_b - M_a)M_a^{-1}   \right) }{ \|M_a(a-b)\| } } \\ 
 		 &=  \displaystyle{ \sup_{a,b\in\O, \text{  } \|M_a(a-b)\|\le C}  \frac{ \rho\left(   \int_0^1 D_{b-a}M_{q(\lambda)} M_{q(\lambda)}^{-1} M_{q(\lambda)} M_a^{-1} d\lambda  \right) }{ \|M_a(a-b)\| } } \\
		 &\le \displaystyle{ \sup_{a,b\in\O, \text{  } \|M_a(a-b)\|\le C}    \int_0^1   \frac{  \rho(D_{b-a} M_{q(\lambda)} M_{q(\lambda)}^{-1} ) }{ \|M_{q(\lambda)}(b-a)\| }   \rho( M_{q(\lambda)}M_a^{-1})^2 d\lambda }\\
		 &\le \displaystyle{ \sup_{a,b\in\O, \text{  } \|M_a(a-b)\|\le C}  \s_1  \int_0^1   \rho( M_{q(\lambda)}M_a^{-1})^2 d\lambda }\\
		 &\le \displaystyle{ \sup_{a,b\in\O, \text{  } \|M_a(a-b)\|\le C}  \s_1  \int_0^1  (1+\lambda C \s_1)^2  d\lambda } \\
		 & \le \s_1 ( 1 + C\s_1 + (C\s_1)^2/3)
\end{split}
\end{equation}
where, typically, the value of $C$ used in the above expression will be the cover constant of the sites. 

We finally note that, given $\s_1$, it is possible to use Eq.~\ref{eq01} to upper bound $\s_0$, and plug this in Theorems~\ref{th1} and~\ref{th2}. 
However, because it is always $1 + C\s_1 + (C\s_1)^2/3 > 1$,  we can always obtain better (less restrictive) conditions for well-behave-ness of anisotropic Voronoi diagrams 
by using instead  Corollaries~\ref{cor1} and~\ref{cor2} directly.

\subsection{Piecewise-Linear Metrics}\label{sec:PL}

So far we have considered conditions for well-behave-ness of anisotropic Voronoi diagrams only for metrics of class $\C^0$ and $\C^1$. 
In practice, however, it may be the case that the input metric is given as a PL function over a simplicial complex. 
This is typically the case in approximation problems in which the metric itself is derived from a previous estimate of the solution. 

In this case, the metric is almost everywhere differentiable, and therefore the integrals in Eq.~\ref{eq01}, 
	and those in the proof of Lemma~\ref{lem2} can be suitably broken into pieces in which the derivative of $M$ is defined. 
This implies that the results from Sec.~\ref{C1} apply to the PL metric case, without change. 
The definition of $\s_1$, however, becomes a supremum over all points where the metric is differentiable, which always exists in the case of PL metrics in which $M$ is 
interpolated inside simplicies, using values given at the vertices. 
In a slight abuse of notation, we replace supremums over the interior of simplicies with maximums, where it is understood that the supremum exists. 

In the piecewise-linear metric case, it is possible to make use of the simple structure of the metric to efficiently compute an upper bound of $\s_1$. 
Let $M^k_i = \partial_k M$ be the derivative of $M$ in the $k$-th coordinate direction (which is constant inside each $i$-th simplex), 
	$\T=\{\tau_i : i=1,\dots,m\}$ the set of simplicies over which $Q$ is linearly interpolated, 
	and $\{v_j : j\in I_i\}$ the set of vertices incident to the $i$-th simplex $\tau_i$.
Noticing, from its definition, that $\s_1$ is invariant to a scaling of $r$,  and from the convexity of $\rho$, it follows that:
\begin{equation}\label{eqPL}
\begin{split}
	\s_1 &= \displaystyle{ \max_{i=1,\dots,m}  \max_{p\in{\tau_i}}  \max_{\|r\|=1} \frac{\rho(D_r M_p M_p^{-1})}{ \|M_p r\| }  } \\
	&\overset{Lem. 7.1}{= }  \displaystyle{ \max_{i=1,\dots,m}  \max_{p\in{\tau_i}}  \max_{\|r\|=1}   \left[ \|M_p r\| \rho_m(M_p (D_r M)^{-1}) \right]^{-1} } \\ 
	&{= }   \displaystyle{ \max_{i=1,\dots,m}     \max_{\|r\|=1}  \left[ \min_{p\in{\tau_i}} \|M_p r\| \rho_m(M_p (D_r M)^{-1}) \right]^{-1} } \\ 
	&\le  \displaystyle{ \max_{i=1,\dots,m}     \max_{\|r\|=1}  \left[   \min_{j\in I_i} \lambda_1(v_j)^2 \rho_m( (D_r M)^{-1} ) \right]^{-1} } \\
	&\overset{Lem. 7.1}{=}  \displaystyle{ \max_{i=1,\dots,m}     \max_{\|r\|=1}  \frac{ \rho(D_r M) }{    \min_{j\in I_i} \lambda_1(v_j)^2 }  } \\
	&\le   \displaystyle{ \max_{i=1,\dots,m}     \max_{\|r\|=1}  \frac{    \sum_{k=1}^n   \rho(  M_i^k ) r_k  }{    \min_{j\in I_i} \lambda_1(v_j)^2 }  } \\
	&\le \displaystyle{ \max_{i=1,\dots,m}    \frac{ \left[   \sum_{k=1}^n   \rho(  M_i^k )^2 \right]^{1/2}  }{    \min_{j\in I_i} \lambda_1(v_j)^2 }  }
\end{split}
\end{equation}
where $\lambda_1(p)$ is the smallest eigenvalue of $M_p$, and the numerator in the sixth line has been maximized by setting
$ r^k =  \rho\left(   M_i^k  \right)  / \left[  \sum_l \rho\left( M_i^l   \right)^2 \right]^{1/2}$, which is clearly a unit vector.

The bound of Eq.~\ref{eqPL} is clearly conservative, however, crucially, it can be computed in time linear in the size of $\T$, since it is a maximum over terms, one per simplex, 
	which (for fixed dimension $n$) can be computed in constant time. 


\section{Practical Considerations}

Consider the following practical scenario. 
A metric $Q$ is given, which is PL over a simplicial complex $\T$, in $n$ dimensions (as in Sec.~\ref{sec:PL}, by linearly interpolating the square-root $M$ inside simplicies). 
We are also provided a finite set $V$ of sites. 

We consider the problem of determining whether the anisotropic Voronoi diagram of $V$ is orphan-free (whether of the Du/Wang, or Labelle/Shewchuk type), and if so, computing it efficiently. 
To answer this question, we can make use of Corollaries~\ref{cor1} and~\ref{cor2}. 
To this end, we first use Eq.~\ref{eqPL} to evaluate the differential-metric variation $\s_1$ of $Q$ (in time proportional to the size of $\T$). 
Next, we may compute the packing and cover constants of $V$ (with respect to either $D_Q^{{ }^{DW}}$ or $D_Q^{{ }^{LS}}$), and check whether  the conditions of Corollaries~\ref{cor1} and~\ref{cor2} are met, 
	in which case we are ensured that the anisotropic Voronoi diagram of $V$ is orphan-free. 
In certain cases, such as when using the simple front-propagation algorithm of~\cite{adt}, it is the case that computing the Voronoi diagram can be done efficiently (in time proportional to the size of $\T$), and the output is correct if there is a priori knowledge that the diagram is orphan-free. 

The missing step in the above reasoning is computing the covering and packing constants of $V$. 
The packing constant can be simply computed from $V$, and by evaluating the metric at points in $V$ only, since it involves only distances between sites. 
Computing, or finding an upper bound of the cover constant, however, may require computing a full anisotropic Voronoi diagram, and measuring distances from all points in each Voronoi region to their generating site. In the case of Labelle/Shewchuk diagrams, computing the diagram, as in~\cite{LS}, may itself already provide an answer to whether the diagram is orphan-free. 

As for Du/Wang diagrams, an efficient solution is to use the optimistic algorithm of~\cite{adt}. This algorithm runs efficiently (in time proportional to the size of $\T$), 
and produces a correct diagram only if the correct diagram of $V$ is orphan-free. If it isn't, then the algorithm returns a diagram that differs from the true one only at the orphans. 
If the diagram of $V$ isn't orphan-free, then the algorithm of~\cite{adt} provides an upper bound of the cover constant $C$, which cannot satisfy Corollary~\ref{cor1}, since the true diagram isn't orphan-free. If, on the other hand, the true diagram of $V$ is orphan-free, then the algorithm provides the exact cover constant, which can be used to verify the conditions of Corollary~\ref{cor1}. 
In conclusion, the use of the algorithm of~\cite{adt}, along with Corollary~\ref{cor1}, to check for orphan-freedom, never produces a false-positive, 
and only produces false-negatives in cases in which it computes the exact cover constant, and therefore the false-negative cases can all be explained only by how loose the bounds of Corollary~\ref{cor1} are, and not by the fact that the algorithm of~\cite{adt} is optimistic. We believe this point to be of practical importance. 

\section{Conclusion}

Given the problem of determining whether the anisotropic Voronoi diagram of a set of sites is orphan-free, 
we have provided an analysis that results in new conditions that can be efficiently evaluated in the case of smooth or piecewise-linear metrics and, 
perhaps more importantly, which can be applied to typical sets of sites resulting from problems in $\mathcal{L}^2$ optimization and approximation. 
While the analysis is somewhat complex and technical, the results are simple to formulate and verify. 
The emphasis throughout has been on the analysis, while the algorithmic aspects are only superficially discussed. 
We hope that this work may provide a useful piece of analysis in the ongoing progress in practical algorithms for the computation of anisotropic Voronoi diagrams and anisotropic Delaunay triangulations.

%
%
%
%
%


\newpage
\bibliographystyle{plain}
\bibliography{avdp}

\section*{Appendix A}

We prove here Lemma~\ref{lem2} which states that, given $a,b\in\O$, and $\s_1$, 
	such that $\|M_a(a-b)\| \le \e$, 
	then it is 
	 \[  1-\e\s_1 \le  \rho_m(M_b M_a^{-1}) \le {\|M_b (a-b)\|}/{\|M_a (a-b)\|}  \le \rho(M_b M_a^{-1}) \le 1+\e\s_1 \]

Clearly, it is
\[  {\|M_b (a-b)\|}/{\|M_a (a-b)\|} = \frac{\| M_b M_a^{-1} M_a (a-b) \| }{\| M_a (a-b) \|} \le \frac{\| M_b M_a^{-1}\| \| M_a (a-b) \| }{\| M_a (a-b) \|} =  \rho(M_b M_a^{-1}) \]
and a similar argument shows that 
\[  {\|M_b (a-b)\|}/{\|M_a (a-b)\|} \ge \rho_m(M_b M_a^{-1}) \]
%
%
We now show thats $1-\e\s_1 \le \rho_m(M_{b} M_a^{-1})$, and $ \rho(M_{b} M_a^{-1}) \le 1+\e\s_1$.
We begin with the following simple lemma: 

\begin{lemma}\label{lem:rhom}
	Given a non-singular matrix $A\in\mathbb{R}^{n\times n}$, it is $\rho(A^{-1}) = \rho_m(A)^{-1}$. 
\end{lemma}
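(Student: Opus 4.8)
The plan is to reduce the identity to the variational (operator-norm) characterization of $\rho$ together with the analogous minimal-stretch characterization of $\rho_m$. Recall that $\rho$ is the spectral matrix norm, so for any matrix $B$ one has $\rho(B) = \max_{x\ne 0} \|Bx\|/\|x\|$; correspondingly, $\rho_m$ measures the smallest amount by which a matrix can shrink a vector, $\rho_m(A) = \min_{y\ne 0}\|Ay\|/\|y\|$. For a symmetric positive-definite matrix this minimum is attained at the eigenvector of the smallest eigenvalue, so it coincides with the smallest eigenvalue, matching the definition used throughout; for a general non-singular $A$ it equals the smallest singular value.

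First I would apply the operator-norm formula to $B = A^{-1}$, obtaining $\rho(A^{-1}) = \max_{x\ne 0} \|A^{-1}x\|/\|x\|$. Since $A$ is non-singular, the substitution $x = Ay$ is a bijection of $\mathbb{R}^n\setminus\{0\}$ onto itself, so
\[ \rho(A^{-1}) = \max_{y\ne 0} \frac{\|y\|}{\|Ay\|} = \left( \min_{y\ne 0} \frac{\|Ay\|}{\|y\|} \right)^{-1} = \rho_m(A)^{-1}, \]
which is exactly the claim.

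The only point requiring care — and the nearest thing to an obstacle — is the interpretation of $\rho_m$ on a general, possibly non-symmetric, non-singular $A$, since the lemma is invoked on products such as $M_b M_a^{-1}$ that need not be symmetric. I would therefore make explicit that $\rho_m(A) = \min_{y\ne 0}\|Ay\|/\|y\|$ equals the smallest singular value $\sigma_{\min}(A)$: writing $\|Ay\|^2/\|y\|^2 = (y^t A^t A\, y)/(y^t y)$ exhibits it as a Rayleigh quotient of the symmetric positive-definite matrix $A^tA$, whose minimum is $\lambda_{\min}(A^tA) = \sigma_{\min}(A)^2$. For symmetric positive-definite $A$ this reduces to the smallest eigenvalue, so the two readings agree and the lemma is consistent with its uses in Lemma~\ref{lem2} and Eq.~\ref{eqPL}. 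An equivalent and even shorter route is to take a singular value decomposition $A = U\Sigma V^t$ and read off $\rho(A^{-1}) = \sigma_{\min}(A)^{-1} = \rho_m(A)^{-1}$ directly from $A^{-1} = V\Sigma^{-1}U^t$.
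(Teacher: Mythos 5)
Your proof is correct, but it takes a genuinely different route from the paper's. The paper's proof is a one-line eigenvalue computation: writing $\l_1,\dots,\l_n$ for the eigenvalues of $A$, it uses the fact that the eigenvalues of $A^{-1}$ are the reciprocals $\l_i^{-1}$, so $\max_i|\l_i|^{-1} = (\min_i|\l_i|)^{-1}$; in other words, it implicitly reads $\rho$ and $\rho_m$ as the maximum and minimum moduli of \emph{eigenvalues} (spectral radius and its inverse-side analogue). You instead work with the variational characterization $\rho(B)=\max_{x\ne 0}\|Bx\|/\|x\|$ and the substitution $x=Ay$ (equivalently, the SVD), which reads $\rho$ as the operator norm and $\rho_m$ as the smallest singular value. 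The two readings coincide for symmetric (more generally, normal) matrices but differ for general non-singular $A$, and this is not an idle distinction here: the lemma is invoked on products $M_b M_a^{-1}$, which are generally non-symmetric, and the surrounding argument in Appendix A uses $\rho$ precisely as an operator norm (e.g.\ the bound $\|M_b M_a^{-1}\,M_a(a-b)\| \le \rho(M_b M_a^{-1})\,\|M_a(a-b)\|$). Your singular-value reading is the one that interfaces correctly with those steps, so your proof buys consistency with the lemma's actual uses, at the cost of a few extra lines; the paper's proof is shorter but is exact, under the paper's stated definition of $\rho$ as the spectral matrix norm, only when the matrix in question is normal. Your explicit remark identifying $\rho_m(A)$ with $\sigma_{\min}(A)$ via the Rayleigh quotient of $A^tA$ is exactly the point that reconciles the two readings, and is a worthwhile clarification.
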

\begin{proof}
	If $\l_i$, $i=1,\dots,n$ are the eigenvalues of A then
		\[ \rho(A^{-1}) = \max_i |\l_i^{-1}| = \max_i |\l_i|^{-1} = (\min_i |\l_i|)^{-1} = \rho_m(A)^{-1} \]
\end{proof}
\vspace*{0.1in}


Consider the parametrized line segment $p(\l) = b + (a-b)\cdot\l$, $\l\in[0,1]$. 
Define the unit vector $r=(a-b)/\|a-b\|$, where $\e \ge \|M_a(a-b)\| = \|a-b\| \|M_a r\|$ implies that $\|a-b\| \le \e/\|M_a r\|$. 
For any positive number $k$, the sub-multiplicative property of the spectral norm implies that a partition of $[0,1]$ into $k$ equal parts satisfies: 
\[  \rho(M_{p(\l)} M_{p(0)}^{-1}) \le \displaystyle{\prod_{i=0}^{k-1} \rho(M_{p((i+1)\Delta)} M_{p(i\Delta)}^{-1})} \]
where $\Delta = 1/k$. 

Taking logarithms on both sides, and the limit of $\Delta$ approaching $0$, 
the above inequality becomes:
\begin{eqnarray}
 \ln\rho(M_{p(\l)} M_{p(0)}^{-1}) & \le & \displaystyle{\lim_{\Delta\rightarrow 0} \sum_{i=0}^{k} \ln\rho(M_{p((i+1)\Delta)} M_{p(i\Delta)}^{-1})} \\
 					  & = & \displaystyle{\lim_{\Delta\rightarrow 0} \sum_{i=0}^{k} \ln\rho(I + \Delta \frac{M_{p((i+1)\Delta)} - M_{p(i\Delta)}}{\Delta} M_{p(i\Delta)}^{-1}) } \\
				    & = & \displaystyle{\lim_{\Delta\rightarrow 0} \sum_{i=0}^{k} \ln\left[ 1 + \Delta \rho(\frac{M_{p((i+1)\Delta)} - M_{p(i\Delta)}}{\Delta} M_{p(i\Delta)}^{-1}) \right] } \\
\label{Veq5}			    & = & \displaystyle{\lim_{\Delta\rightarrow 0} \sum_{i=0}^{k} \Delta \rho(\frac{M_{p((i+1)\Delta)} - M_{p(i\Delta)}}{\Delta} M_{p(i\Delta)}^{-1}) } \\
\label{Veq6}		         & = & \displaystyle{ \int_{0}^{\l} \rho(D_{b-a} M_{p(\tau)} M_{p(\tau)}^{-1}) d\tau } \\
				         & \le & \displaystyle{ \int_{0}^{\l} \rho(\frac{D_r M_{p(\tau)} M_{p(\tau)}^{-1}}{\|M_{p(\tau)} r\|}) \frac{\|M_{p(\tau)} r\|}{\|M_a r\|} \e d\tau } \\
				         & \le & \displaystyle{ \int_{0}^{\l} \e\s_1 \rho(M_{p(\tau)} M_{p(0)}^{-1}) d\tau}
\end{eqnarray}

Equations~\ref{Veq5} and~\ref{Veq6} require some clarification. 
Equation~\ref{Veq5} can be obtained by taking the Taylor expansion with remainder:
	$\ln(1+x)=x - c^2 x^2/2$ for some constant $c\in[0,x]$, and showing that the second term vanishes when taking the limit for $\Delta\rightarrow 0$. 

Likewise, Equation~\ref{Veq6} can be worked out backwards as:
\begin{eqnarray*}
	 \displaystyle{ \int_{0}^{\l} \rho(D_{a-b} M_{p(\tau)} M_{p(\tau)}^{-1}) d\tau } &=&  \displaystyle{ \int_{0}^{\l} \rho( \lim_{\Delta_1\rightarrow 0}\frac{M_{p(i\Delta+\Delta_1)} - M_{p(i\Delta)}}{\Delta_1}   M_{p(\tau)}^{-1}) d\tau } \\
	 				&=& \displaystyle{   \lim_{\Delta_1\rightarrow 0}  \int_{0}^{\l} \rho( \frac{M_{p(i\Delta+\Delta_1)} - M_{p(i\Delta)}}{\Delta_1}   M_{p(\tau)}^{-1}) d\tau }
\end{eqnarray*}
Since $M\in\C^1$, the limit is uniformly convergent on a compact domain $\O$, and so
we can move it outside of the integral. 
The integral exists by the continuity of its integrand, and so equals the limit over all partitions of
$[0,\l]$ of the Riemann sum as the size of the intervals of the partition go to 0.
Now the integrand depends on $\Delta_1$, but uniformly. 
Therefore, for any $\epsilon>0$ there is a $\Delta'>0$ such that, for all $\Delta_1<\Delta'$, 
	the Riemann sum associated with the equally-spaced partition of $[0,\l]$ of size $\Delta_1$ is within $\epsilon$ of the true integral value. 
Hence 
\begin{eqnarray*}
	 \displaystyle{   \lim_{\Delta_1\rightarrow 0}  \int_{0}^{\l} \rho( \frac{M_{p(i\Delta+\Delta_1)} - M_{p(i\Delta)}}{\Delta_1}   M_{p(\tau)}^{-1}) d\tau }
	 				&=&  \displaystyle{\lim_{\Delta\rightarrow 0} \sum_{i=0}^{k} \Delta \rho(\frac{M_{p((i+1)\Delta)} - M_{p(i\Delta)}}{\Delta} M_{p(i\Delta)}^{-1}) }
\end{eqnarray*}
\vspace*{0.2in}

%


The expression
\[ \ln\rho(M_{p(\l)} M_{p(0)}^{-1})  \le  \displaystyle{ \int_{0}^{\l} \e\s_1 \rho(M_{p(\tau)} M_{p(0)}^{-1}) d\tau} \]
implies that we can define an upper bound $  \rho(M_{p(\l)} M_{p(0)}^{-1}) \le \mu(\l)$ such that:
\[  \ln\mu(\l)=\displaystyle{ \int_{0}^{\l} \e\s_1 \mu(\l) d\tau}  \] 
which, taking derivatives, becomes: 
\begin{equation}\label{Veq:BODE}
	\mu^{-1}(\l) \frac{d \mu(\l)}{d\l} = \e\s_1 \mu(\l)
\end{equation}

Equation~\ref{Veq:BODE} is a Bernoulli ordinary differential equation. 
Given the initial condition $\mu(0) = \rho(M_b M_b^{-1}) = 1$, it can be integrated it to obtain
$\mu(\l) = (1 - \e\s_1\l)^{-1}$. 
Since $\mu(\l)$ is an upper bound of $\rho(M_{p(\l)} M_{p(0)}^{-1})$, this implies
\[    \rho(M_{p(\l)} M_{p(0)}^{-1}) \le (1 - \e\s_1\l)^{-1} \]
and, since $a=p(1)$ and $b=p(0)$, in particular
\[    \rho(M_{a} M_b^{-1}) \le (1 - \e\s_1)^{-1} \]

By Lemma~\ref{lem:rhom} it is
\[ \rho_m(M_b M_a^{-1})^{-1} =  \rho(M_{a} M_b^{-1})^{-1} \le (1 - \e\s_1)^{-1} \]
and thus
\[  \rho_m(M_b M_a^{-1}) \ge 1 - \e\s_1 \]
as claimed.

The other inequality $\rho(M_b M_a^{-1}) \le 1+\e\s_1$ can be obtained with a similar argument
	by using the identity $\rho_m(A\cdot B) \ge \rho_m(A) \rho_m(B)$, and the fact that $\rho_m(A)\ge 1 - \rho(A-I)$:
\begin{eqnarray*}
 \ln\rho_m(M_{p(\l)} M_{p(0)}^{-1}) & \ge & \displaystyle{\lim_{\Delta\rightarrow 0} \sum_{i=0}^{k} \ln\rho_m(M_{p((i+1)\Delta)} M_{p(i\Delta)}^{-1})} \\
 					  & = & \displaystyle{\lim_{\Delta\rightarrow 0} \sum_{i=0}^{k} \ln\left[1 - \rho(\Delta \frac{M_{p((i+1)\Delta)} - M_{p(i\Delta)}}{\Delta} M_{p(i\Delta)}^{-1})\right] } \\
				    & = & -\displaystyle{\lim_{\Delta\rightarrow 0} \sum_{i=0}^{k} \Delta \rho(\frac{M_{p((i+1)\Delta)} - M_{p(i\Delta)}}{\Delta} M_{p(i\Delta)}^{-1}) } \\
				         & = & -\displaystyle{ \int_{0}^{\l} \rho(D_{a-b} M_{p(\tau)} M_{p(\tau)}^{-1}) d\tau } \\
				         & = & -\displaystyle{ \int_{0}^{\l} \rho(\frac{D_r M_{p(\tau)} M_{p(\tau)}^{-1}}{\|M_{p(\tau)} r\|}) \frac{\|M_{p(\tau)} r\|}{\|M_a r\|} \e d\tau } \\
				         & \ge &- \displaystyle{ \int_{0}^{\l} \e\s_1 \rho(M_{p(\l)} M_{p(0)}^{-1}) d\tau}
\end{eqnarray*}
where the same technical considerations as those discussed for Equations~\ref{Veq5} and~\ref{Veq6} apply here as well. 

A lower bound is $\mu_m(\l) \le \rho_m(M_{p(\l)} M_{p(0)}^{-1})$ with
\[ \ln\mu_m(\l)= -\displaystyle{ \int_{0}^{\l} \e\s_1 \mu_m(\l) d\tau}  \] 
which, taking derivatives, and integrating a differential equation similar to Equation~\ref{Veq:BODE}, with initial condition $\mu_m(0) = \rho_m(I)=1$, results in $\mu_m(\l) = (1 + \e\s_1\l)^{-1}$
and thus
\[  \rho_m(M_a M_b^{-1}) = \rho_m(M_{p(1)} M_{p(0)}^{-1}) \ge (1 + \e\s_1)^{-1} \]
which, using Lemma~\ref{lem:rhom} implies
\[ \rho(M_b M_a^{-1})^{-1} =  \rho_m(M_a M_b^{-1}) \ge (1 + \e\s_1)^{-1} \]
and thus
\begin{equation}\label{tech2}
 \rho(M_b M_a^{-1}) \le 1 + \e\s_1 
\end{equation} 
as claimed.

\end{document}